\long\def\comment#1{}
\newfont{\bbb}{msbm10 scaled 700}
\newfont{\bb}{msbm10 scaled 1100}
\newcommand{\RR}{\mbox{\bb R}}
\newcommand{\PP}{\mbox{\bb P}}
\newcommand{\EE}{\mbox{\bb E}}
\newcommand{\yv}{{\bf y}}
\newcommand{\zerov}{{\bf 0}}
\newcommand{\Bm}{{\bf B}}
\newcommand{\Hm}{{\bf H}}
\newcommand{\Id}{{\bf I}}
\newcommand{\Sm}{{\bf S}}
\newcommand{\Wm}{{\bf W}}
\newcommand{\Zm}{{\bf Z}}
\newcommand{\Tt}{\text{T}}
\newcommand{\Hc}{{\cal H}}
\newcommand{\Nc}{{\cal N}}
\newcommand{\RNum}[1]{\uppercase\expandafter{\romannumeral #1\relax}}
\newcommand{\Lambdam}{\hbox{\boldmath$\Lambda$}}
\newcommand{\Sigmam}{\hbox{\boldmath$\Sigma$}}
\newcommand{\trace}{{\hbox{tr}}}
\newcommand{\var}{{\hbox{var}}}
\newcommand{\eqdef}{\stackrel{\Delta}{=}}
\newcommand{\squeezeequ}{\medmuskip=2mu \thinmuskip=1mu \thickmuskip=3mu}
\def\LRT#1#2{\!
\raisebox{.2ex}{$
{{\scriptstyle\;#1}\atop{\displaystyle\gtrless}}
\atop
{\raisebox{-1.25ex}{$\scriptstyle\;#2$}}
$}
\!}
\def\endthebibliography{%
  \def\@noitemerr{\@latex@warning{Empty `thebibliography' environment}}%
  \endlist
}
\newtheorem{theorem}{Theorem}
\newtheorem{lemma}{Lemma}
\title{Learning Requirements for Stealth Attacks}
\name{Ke Sun$^*$, I\~naki Esnaola$^{*\dag}$, Antonia M. Tulino$^{\S\ddag}$, H. Vincent Poor$^{\dag}$
}
\address{$^*$ Dept. of Automatic  Control and Systems Engineering, University of Sheffield, Sheffield S1 3JD, UK\\
$^\S$ Nokia Bell Labs, Holmdel, NJ 07733, USA \\
 $^{\dag}$Dept. of Electrical Engineering, Princeton University, Princeton, NJ 08544, USA  \\
 $^{\ddag}$University degli Studi di Napoli Federico II, 80138 Naples, Italy}
\begin{document}
%
\maketitle
\begin{abstract}
The learning data requirements are analyzed for the construction of stealth attacks in state estimation.
In particular, the training data set is used to compute a sample covariance matrix that results in a random matrix with a Wishart distribution.
The ergodic attack performance is defined as the average attack performance obtained by taking the expectation with respect to the distribution of the training data set.
The impact of the training data size on the ergodic attack performance is characterized by proposing an upper bound for the performance.
Simulations on the IEEE 30-Bus test system show that the proposed bound is tight in practical settings.
\end{abstract}
\begin{keywords}
stealth attacks, data injection attacks, random matrix theory, information theory
\end{keywords}

\section{Introduction}

Data injection attacks \cite{liu_false_2009} are one of the main threats that the smart grid faces.
Attack constructions that exploit the sparsity of the data injection vector have been proposed \cite{kim_strategic_2011} as practical constructions that can disrupt the state estimation performed by the operator.
Distributed attack construction and detection strategies are studied in \cite{cui_coordinated_2012, ozay_sparse_2013, khan_secure_2013,tajer_distributed_2011} where it is shown that the bad data detection procedures put in place by the operator can be defeated by several attackers that control a subset of the sensing infrastructure in the grid.
Modelling the state variables as a random process, attack constructions that exploit the statistical knowledge of the state variables are proposed in \cite{kosut_malicious_2011, esnaola_maximum_2016}.
The addition of probabilistic structure to the state variables opens the door to the definition of information theoretic attacks for which the damage and probability of detection are characterized in terms of information measures \cite{sun_stealth_2018}.
In \cite{Sun_information-theoretic_2017} the assumption of perfect knowledge of the statistics of the state variables is relaxed by considering a training data set to learn the statistics.
Therein, it is numerically shown that the performance of the attack when imperfect knowledge of the statistics is available changes significantly with respect to the case with perfect knowledge.
In this paper, we analytically characterize the impact of the training data size and the correlation between state variables over the attack performance.

\vspace{-0.5em}
\section{System Model}
\vspace{-0.5em}
\subsection{State Estimation and Bad Data Detection}

The measurement model for state estimation with linearized dynamics is given by

\begin{IEEEeqnarray}{c}\label{Equ:DCSE}
Y^{M} = \Hm X^{N} + Z^{M},
\end{IEEEeqnarray}
where $Y^{M} \in \RR^{M }$ is a vector of random variables describing the measurements; $X^{N} \in \RR^{N}$ is a vector of random variables describing the state variables;
$\Hm \in \RR^{M\times N}$ is the linearized Jacobian measurement matrix which is determined by the power network topology and the admittances of the branches;
and $Z^{M} \in \RR^{M} $ is the additive white Gaussian noise (AWGN) with distribution $\Nc(\zerov,\sigma^{2} \Id_{M})$ where $\sigma^2$ is the variance of the error introduced by the sensors \cite{abur_power_2004}, \cite[Chapter 15]{grainger_power_1994}.
The vector of the state variables is assumed to follow a multivariate Gaussian distribution given by
$
  X^{N} \sim \Nc(\zerov,\Sigmam_{X\!X}),
$
where $\Sigmam_{X\!X}$ is the positive-definite covariance matrix of the distribution of the state variables.
The Gaussian assumption for the vector of the state variables is also adopted by \cite{kosut_malicious_2011} and \cite{esnaola_maximum_2016}.
As a result of the linear measurement model in (\ref{Equ:DCSE}), the vector of measurements also follows a multivariate Gaussian distribution denoted by
$
  Y^{M} \sim \Nc(\zerov,\Sigmam_{Y\!Y}),
$
where $\Sigmam_{Y\!Y} = \Hm\Sigmam_{X\!X}\Hm^{\Tt} + \sigma^{2}\Id_{M} $. 

Data injection attacks corrupt the measurements available to the operator by adding an attack vector to the measurements.
The resulting vector of compromised measurements is given by
\vspace{-0.8em}
\begin{IEEEeqnarray}{c}
\label{eq:measurement_model}
Y^{M}_{A} = \Hm X^{N} + Z^{M} + A^{M},
\end{IEEEeqnarray}
where $A^{M} \in \RR^{M}$ is the attack vector and $Y^{M}_{A} \in \RR^{M } $ is the vector containing the compromised measurements \cite{liu_false_2009}.
Following the approach in \cite{sun_stealth_2018} we adopt a Gaussian framework for the construction of the attack vector, i.e.
$
  A^{M} \sim  \Nc (\zerov,\Sigmam_{A\!A}),
$
where $\Sigmam_{A\!A}$ is the covariance matrix of the attack distribution.
The rationale for choosing a Gaussian distribution for the attack vector follows from the fact that for the attack model in (\ref{eq:measurement_model}) the additive attack distribution that minimizes the mutual information between the vector of state variables and the compromised measurements is Gaussian \cite{SA_TIT_13}.
Because of the Gaussianity of the attack distribution, the vector of compromised measurements is distributed as
$
  Y_{A}^{M} \sim \Nc(\zerov,\Sigmam_{Y_{A}\!Y_{A}}),
$
where $\Sigmam_{Y_{A}\!Y_{A}} = \Hm\Sigmam_{X\!X}\Hm^{\Tt} + \sigma^{2}\Id_{M} + \Sigmam_{A\!A} $.


The operator of the power system makes use of the acquired measurements to detect the attack.
The detection problem is cast as a hypothesis testing problem with hypotheses
\begin{IEEEeqnarray}{cl}
\Hc_{0}:  \ & Y^{M} \sim \Nc(\zerov,\Sigmam_{Y\!Y}), \quad \text{versus}  \\
\Hc_{1}:  \ & Y^{M} \sim \Nc(\zerov,\Sigmam_{Y_{A}\!Y_{A}}).
\end{IEEEeqnarray}
The null hypothesis $\Hc_{0}$ describes the case in which the power system is not compromised, while the alternative hypothesis $\Hc_{1}$ describes the case in which the power system is under attack.
The Neyman-Pearson lemma \cite{poor_introduction_1994} states that for a fixed probability of Type \RNum{1} error,
the likelihood ratio test (LRT) achieves the minimum Type \RNum{2} error when compared with any other test with an equal or smaller Type \RNum{1} error.
Consequently, the LRT is chosen to decide between $\Hc_{0}$ and $\Hc_{1}$ based on the available measurements.
The LRT between $\mathcal{H}_{0}$ and $\mathcal{H}_{1}$ takes following form:
\vspace{-0.6em}
\begin{equation}\label{LHRT}
\vspace{-0.25em}
L(\yv) \eqdef \frac{f_{Y^{M}_{A}}(\yv)}{f_{Y^{M}}(\yv)} \ \LRT{\Hc_{1}}{\Hc_{0}} \ \tau,
\vspace{-0.3em}
\end{equation}
where $\yv \in \RR^{M}$ is a realization of the vector of random variables modelling the measurements, $f_{Y_A^M}$ and  $f_{Y^M}$ denote the probability density functions (p.d.f.'s) of $Y_A^M$ and  $Y^M$, respectively, and $\tau$ is the decision threshold set by the operator to meet the false alarm constraint.

\subsection{Information-Theoretic Attacks}


The purpose of the attacker is to disrupt the normal state estimation procedure by
minimizing the information that the operator acquires about the state variables, while guaranteeing that the probability of attack detection is small enough, and therefore, remain concealed in the system. To that end, the attacker aims to minimize the mutual information between the vector of state variables and the vector of compromised measurements denoted by $I(X^{N};Y_{A}^{M})$.
On the other hand, we assess the performance of attack detection by the LRT via the Chernoff-Stein lemma \cite{cover_elements_2012}, which characterizes
the asymptotic exponent of the probability of detection when the number of observations of measurement vectors grows to infinity.
%
In our setting, the Chernoff-Stein lemma states that for any LRT and $\epsilon  \in (0,1/2)$, it holds that
\begin{align}
\label{eq:Chernoff-Stein}
\lim_{n \to \infty} \frac{1}{n} \log \beta_{n}^{\epsilon} = -D(P_{Y^{M}_{A}}\|P_{Y^{M}}),
\end{align}
where $D(\cdot \|\cdot)$ is the Kullback-Leibler (KL) divergence, $\beta_{n}^{\epsilon}$ is the minimum Type II error such that the Type I error $\alpha$ satisfies $\alpha < \epsilon$, and $n$ is the number of $M$-dimensional measurement vectors that are available for the LRT.
Therefore, for the attacker, minimizing the asymptotic detection probability is equivalent to minimizing $D(P_{Y^{M}_{A}}\|P_{Y^{M}})$, where $P_{Y_A^M}$ and  $P_{Y^M}$ denote the probability distributions of $Y_A^M$ and  $Y^M$, respectively.

A stealthy attack construction that combines these two information measures in one cost function is proposed in \cite{Sun_information-theoretic_2017}. Interestingly, the resulting cost function boils down to the effective secrecy proposed in \cite{hou_effective_2014} which can be written as
\begin{IEEEeqnarray}{c}\label{Equ:Steallth_Obj}
 I(X^{N};Y^{M}_{A}) \hspace{-0.1em} +  \hspace{-0.2em} D( P_{{Y}^{M}_{A}}\|P_{Y^{M}}) \hspace{-0.2em} =  \hspace{-0.2em} D( P_{X^{N}Y_{A}^{M}}\|P_{X^{N}}P_{Y^{M}}),\hspace{-0.2em} \IEEEeqnarraynumspace
\end{IEEEeqnarray}
where $P_{X^{N}Y_{A}^{M}}$ is the joint distribution of $X^{N}$ and $Y_{A}^{M}$.
The resulting attack construction problem is equivalent to solving the following optimization problem:
\begin{align} \label{Stealth_Obj}
\underset{A^{M}}{\text{min}} \ D( P_{X^{N}Y_{A}^{M}}\|P_{X^{N}}P_{Y^{M}}).
\end{align}
Under the attack Gaussianity assumption the cost function in (\ref{Equ:Steallth_Obj}) is a function of the attack covariance matrix $\Sigmam_{A\!A}$. Let us define the cost function for the Gaussian case as
\begin{IEEEeqnarray}{l}
\label{eq:stealth_cost_gauss}
f \! \left(\Sigmam_{A\!A}\right) \! \eqdef \! \frac{1}{2} \! \left[\trace(\Sigmam_{Y\!Y}^{-\!1}\Sigmam_{A\!A})\!-\! \log \! |\Sigmam_{A\!A}\!+\!\sigma^{2}\Id_{M}|\!-\!\log \!|\Sigmam_{Y\!Y}^{-\!1}|\right]. \IEEEeqnarraynumspace \squeezeequ
\end{IEEEeqnarray}
It is shown in \cite{sun_stealth_2018} that (\ref{Stealth_Obj}) is a convex optimization problem and that the covariance matrix for the optimal Gaussian attack is $\Sigmam_{A\!A}^{\star} \eqdef \Hm\Sigmam_{X\!X}\Hm^{{\text{\rm T}}}$.

\section{Learning Attack Construction}
The stealth attack construction proposed above requires perfect knowledge of the covariance matrix of the state variables and the linearized Jacobian measurement matrix. In the following we study the performance of the attack when the second order statistics are not perfectly known by the attacker but the linearized Jacobian measurement matrix is known.
We model the partial knowledge by assuming that the attacker has access to a sample covariance matrix of the state variables.
Specifically, the training data consisting of $K$ state variable realizations $\{X^N_{i}\}^{K}_{i=1}$ is available to the attacker. That being the case the attacker computes the unbiased estimate of the covariance matrix of the state variables given by
\vspace{-0.5em}
\begin{equation}\label{Equ:SC}
\Sm_{X\!X} = \frac{1}{K-1} \sum_{i=1}^{K} X^N_{i} (X^N_{i})^{\Tt}.
\vspace{-0.5em}
\end{equation}
The stealth attack constructed using the sample covariance matrix follows a multivariate Gaussian distribution given by
\begin{equation}
\tilde{A}^{M} \sim \Nc (\zerov, \Sigmam_{\tilde{A}\!\tilde{A}}),
\end{equation}
where $\Sigmam_{\tilde{A}\!\tilde{A}} = \Hm\Sm_{X\!X}\Hm^{\Tt}$.

Since the sample covariance matrix in (\ref{Equ:SC}) is a random matrix with central Wishart distribution given by
\begin{align}\label{Equ:Wishart_S_XX}
 \vspace{-0.25em}
\Sm_{X\!X} \sim \frac{1}{K-1}W_{N} (K-1, \Sigmam_{X\!X}),
 \vspace{-0.25em}
\end{align}
the ergodic counterpart of the cost function in (\ref{Equ:Steallth_Obj}) is defined in terms of the conditional KL divergence given by
\begin{equation}
\label{eq:ergodic_cost}
\EE_{\Sm_{X\!X}}\!\left [D\left( P_{X^{N}Y_{A}^{M}|\Sm_{X\!X}}\|P_{X^{N}}P_{Y^{M}}\right)\right].
\end{equation}
The ergodic cost function characterizes the expected performance of the attack averaged over the realizations of training data. Note that the performance using the sample covariance matrix is suboptimal \cite{Sun_information-theoretic_2017} and that the ergodic performance converges asymptotically to the optimal attack construction when the size of the training data set increases.

\section{Upper Bound for Ergodic Attack Performance}

In this section, we analytically characterize the ergodic attack performance defined in (\ref{eq:ergodic_cost}) by providing an upper bound using random matrix theory tools.
Before introducing the upper bound, some auxiliary results on the expected value of the extreme eigenvalues of Wishart random matrices are presented below.
 \vspace{-0.5em}
\subsection{Auxiliary Results in Random Matrix Theory}

\begin{lemma}\label{Pro:VarianceMaxSingular}
Let $\Zm_{L}$ be an $(K-1)\times L$ matrix whose entries are independent standard normal random variables, then
\begin{equation}
\textnormal{\var}\left(s_{max}(\Zm_{L})\right) \leq 1,
\end{equation}
where $\textnormal{\var}\left(\cdot\right)$ denotes the variance and $s_{max}(\Zm_{L})$ is the maximum singular value of $\Zm_{L}$.
\end{lemma}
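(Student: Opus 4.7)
The plan is to exploit the fact that the largest singular value is a $1$-Lipschitz function of the matrix entries, and then invoke a Gaussian concentration (Poincar\'e-type) inequality on the vectorized Gaussian matrix.

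First I would recall that $s_{\max}(\Zm_L) = \|\Zm_L\|_{\mathrm{op}}$, the spectral norm of $\Zm_L$. Viewing the entries of $\Zm_L$ as a vector $\zv \in \RR^{(K-1)L}$ with distribution $\Nc(\zerov, \Id_{(K-1)L})$, define $g(\zv) \eqdef s_{\max}(\Zm_L)$. By the triangle inequality for the spectral norm together with the inequality $\|\cdot\|_{\mathrm{op}} \leq \|\cdot\|_{F}$, for any two matrices $\Am,\Bm$ of the same dimensions we have
\begin{equation}
\bigl| s_{\max}(\Am) - s_{\max}(\Bm) \bigr| \leq \|\Am - \Bm\|_{\mathrm{op}} \leq \|\Am - \Bm\|_{F},
\end{equation}
so $g$ is $1$-Lipschitz with respect to the Euclidean norm on $\RR^{(K-1)L}$.

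Next I would apply the Gaussian Poincar\'e inequality, which asserts that for any sufficiently smooth $f:\RR^{n}\to\RR$ and $Z \sim \Nc(\zerov,\Id_{n})$,
\begin{equation}
\textnormal{var}(f(Z)) \leq \EE\bigl[\|\nabla f(Z)\|^{2}\bigr].
\end{equation}
Since $g$ is $1$-Lipschitz it is differentiable almost everywhere (by Rademacher's theorem) with $\|\nabla g\| \leq 1$ a.e., and a standard mollification argument extends the Poincar\'e bound to the Lipschitz case. This yields $\textnormal{var}(g(\zv)) \leq 1$, which is exactly the claim.

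The main step requiring care is justifying the application of the Gaussian Poincar\'e inequality to the non-smooth function $g$; this is routine once one appeals to approximation of Lipschitz functions by smooth ones, but it is the only nontrivial technical point. Everything else reduces to the elementary Lipschitz bound on the spectral norm, so I would expect the written proof to be short.
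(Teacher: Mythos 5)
Your proof is correct, and it shares the paper's opening move (the largest singular value is $1$-Lipschitz with respect to the Frobenius norm on the entries), but the second half takes a genuinely different route. The paper feeds the Lipschitz property into the Gaussian concentration inequality \cite[Proposition 5.34]{vershynin_introduction_2012} to get the tail bound $\PP\left[\left|s_{max}(\Zm_{L})-\EE[s_{max}(\Zm_{L})]\right|>t\right]\leq 2\exp\{-t^{2}/2\}$, declares $s_{max}(\Zm_{L})$ sub-gaussian with variance proxy $\sigma_{p}^{2}\leq 1$, and concludes from $\var\left(s_{max}(\Zm_{L})\right)\leq\sigma_{p}^{2}$. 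You instead invoke the Gaussian Poincar\'e inequality $\var(f(Z))\leq\EE\left[\|\nabla f(Z)\|^{2}\right]$ and use $\|\nabla g\|\leq 1$ almost everywhere. Your route is arguably the cleaner of the two for this particular claim: the Poincar\'e inequality delivers the variance bound with constant exactly $1$ in one step, whereas passing through a sub-gaussian tail bound and a variance proxy is lossy in the constants if one tracks them honestly (integrating the tail $2e^{-t^{2}/2}$ only gives $\var\leq 2\ln 2+2$ directly; recovering $\var\leq 1$ requires the moment-generating-function characterization of the variance proxy, which the tail bound alone does not supply with the same constant). The paper's approach buys a stronger conclusion as a by-product --- exponential concentration of $s_{max}(\Zm_{L})$, not just a second-moment bound --- but for the lemma as stated your argument is both shorter and tighter. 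The only technical point you must discharge, extending the Poincar\'e inequality from smooth to Lipschitz functions by mollification, is standard and you flag it appropriately.
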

\begin{proof}
Note that $s_{max}(\Zm_{L})$ is a 1-Lipschitz function of matrix $\Zm_{L}$, the maximum singular value of $\Zm_{L}$ is concentrated around the mean \cite[Proposition 5.34]{vershynin_introduction_2012} given by  $\EE[s_{max}(\Zm_{L})]$. Then for $t\geq 0$, it holds that
\begin{align}
\PP \! \left[ \left|s_{max}(\Zm_{L}) \!- \!\EE[s_{max}(\Zm_{L})]\right| > t \right] &\leq 2\exp\{-t^2/2\}\\
&\leq \exp\{1- t^2/2\}.
\end{align}
Therefore $s_{max}(\Zm_{L})$ is a sub-gaussian random variable with variance proxy $\sigma_{p}^2 \leq 1$.
The lemma follows from the fact that  $\textnormal{\var}\left(s_{max}(\Zm_{L})\right) \leq \sigma_{p}^2$.
\end{proof}

\begin{lemma}\label{Pro:MaxMinEigBound}
Let $\Wm_{L}$ denote a central Wishart matrix distributed as $\frac{1}{K-1} W_{L}(K-1,\Id_{L})$, then the non-asymptotic expected value of the extreme eigenvalues of $\Wm_{L}$ is bounded by
\vspace{-0.5em}
\begin{IEEEeqnarray}{l}\label{Equ:MinEigBound}
\left(1-\sqrt{L/(K-1)}\right)^2\leq \EE[\lambda_{min}(\Wm_{L})]
\end{IEEEeqnarray}
and
\vspace{-0.5em}
\begin{IEEEeqnarray}{l}\label{Equ:MaxEigBound}
\EE[\lambda_{max}(\Wm_{L})] \leq \left(1+\sqrt{L/(K-1)}\right)^2 + 1/(K-1),\IEEEeqnarraynumspace
\vspace{-0.5em}
\end{IEEEeqnarray}
where $\lambda_{min}(\Wm_{L})$ and $\lambda_{max}(\Wm_{L})$ denote the minimum eigenvalue and maximum eigenvalue of $\Wm_{L}$, respectively.
\end{lemma}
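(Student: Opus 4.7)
The plan is to reduce everything about the extreme eigenvalues of the Wishart matrix $\Wm_L$ to statements about the extreme singular values of the underlying Gaussian matrix $\Zm_L$, and then invoke the classical Gordon-type bounds on the expected extreme singular values, namely $\EE[s_{min}(\Zm_L)]\geq\sqrt{K-1}-\sqrt{L}$ and $\EE[s_{max}(\Zm_L)]\leq\sqrt{K-1}+\sqrt{L}$ (see \cite[Theorem 5.32]{vershynin_introduction_2012}). The key identities I would start from are
\begin{equation}
\lambda_{min}(\Wm_L)=\frac{1}{K-1}\,s_{min}(\Zm_L)^2,\qquad \lambda_{max}(\Wm_L)=\frac{1}{K-1}\,s_{max}(\Zm_L)^2,
\end{equation}
which follow directly from the definition $\Wm_L=\tfrac{1}{K-1}\Zm_L^\Tt \Zm_L$.

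For the lower bound on $\EE[\lambda_{min}(\Wm_L)]$, I would apply Jensen's inequality to the convex map $x\mapsto x^2$ (restricted to the non-negative random variable $s_{min}(\Zm_L)$), giving
\begin{equation}
\EE[\lambda_{min}(\Wm_L)] = \frac{1}{K-1}\EE[s_{min}(\Zm_L)^2]\geq \frac{1}{K-1}\bigl(\EE[s_{min}(\Zm_L)]\bigr)^2.
\end{equation}
Plugging in the Gordon-type lower bound on $\EE[s_{min}(\Zm_L)]$ then yields exactly $(1-\sqrt{L/(K-1)})^2$, which is the claimed inequality (\ref{Equ:MinEigBound}).

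For the upper bound on $\EE[\lambda_{max}(\Wm_L)]$, I would use the variance-mean decomposition
\begin{equation}
\EE[s_{max}(\Zm_L)^2] = \bigl(\EE[s_{max}(\Zm_L)]\bigr)^2 + \textnormal{var}\bigl(s_{max}(\Zm_L)\bigr),
\end{equation}
bound the first summand by $(\sqrt{K-1}+\sqrt{L})^2$ via the Gordon-type upper bound, and bound the second by $1$ using Lemma \ref{Pro:VarianceMaxSingular}. Dividing by $K-1$ produces the right-hand side of (\ref{Equ:MaxEigBound}).

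The main obstacle here is not the algebra but making sure the two Gordon-type bounds on $\EE[s_{min}]$ and $\EE[s_{max}]$ are cited (or justified) cleanly, since without them the argument collapses; once they are in hand, the rest is essentially Jensen plus the variance identity plus Lemma \ref{Pro:VarianceMaxSingular}. A minor subtlety to handle is the implicit assumption $K-1\geq L$ needed for the lower bound to be meaningful (otherwise $s_{min}=0$ and the inequality still holds trivially since the right-hand side can be taken as $(1-\sqrt{L/(K-1)})^2\leq 0$ or replaced by $0$), so I would flag this case but not dwell on it.
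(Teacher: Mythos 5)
Your proof is correct and follows essentially the same route as the paper: the identity $\lambda_{\min/\max}(\Wm_L)=s_{\min/\max}(\Zm_L)^2/(K-1)$, Jensen's inequality for the minimum eigenvalue, the variance--mean decomposition together with Lemma \ref{Pro:VarianceMaxSingular} for the maximum, and the Gordon-type bounds from \cite[Theorem 5.32]{vershynin_introduction_2012}. One small quibble: in the edge case $K-1<L$ the right-hand side $\left(1-\sqrt{L/(K-1)}\right)^2$ is a \emph{nonnegative} square (not $\leq 0$ as you suggest), so squaring the possibly negative Gordon lower bound does not rescue (\ref{Equ:MinEigBound}) there --- the lemma genuinely needs $K-1\geq L$, an assumption the paper also leaves implicit.
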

\begin{proof}
Note that \cite[Theorem 5.32]{vershynin_introduction_2012}
\begin{equation}
\vspace{-0.1em}
\sqrt{K-1} -\sqrt{L}\leq \EE[s_{min}(\Zm_{L})] \label{Equ:Bound_MinV}
\vspace{-0.1em}
\end{equation}
and
\begin{equation}
\sqrt{K-1} + \sqrt{L} \geq \EE[s_{max}(\Zm_{L})] , \label{Equ:Bound_MaxV}
\end{equation}
where $s_{min}(\Zm_{L})$ is the minimum singular value of $\Zm_{L}$.
Given the fact that $\Wm_{L} = \frac{1}{K-1} \Zm_{L}^{\Tt}\Zm_{L}$, then it holds that
\begin{align}
\EE[\lambda_{min}(\Wm_{L})] &= \!\frac{\EE\!\left[{s_{min}(\Zm_{L})}^2\right]}{K-1}  \!\geq\! \frac{\EE\left[s_{min}(\Zm_{L})\right]^2}{K-1} \label{Equ:Exp_MinEig}
\end{align}
and
\vspace{-1em}
\begin{IEEEeqnarray}{l}
\EE[\lambda_{max}(\Wm_{L})] \!= \!\frac{\EE\!\left[{s_{max}(\Zm_{L})}^2\right]}{K-1}  \!\leq \!\frac{\EE\left[s_{max}(\Zm_{L})\right]^2 \hspace{-0.2em} + 1}{K-1}, \IEEEeqnarraynumspace \squeezeequ \label{Equ:Exp_MaxEig}
\end{IEEEeqnarray}
where (\ref{Equ:Exp_MaxEig}) follows from Lemma \ref{Pro:VarianceMaxSingular}.
Combining (\ref{Equ:Bound_MinV}) with (\ref{Equ:Exp_MinEig}), and (\ref{Equ:Bound_MaxV}) with (\ref{Equ:Exp_MaxEig}), respectively, yields the lemma.
\end{proof}
\vspace{-1em}
\subsection{Main Result}
\vspace{-0.25em}
The ergodic attack performance is given by
\vspace{-0.5em}
\begin{IEEEeqnarray}{ll}
\label{eq:exp_stealth_cost_gauss}
\nonumber
&\EE\left[f(\Sigmam_{\tilde{A}\!\tilde{A}} )\right]\\
& \ = \! \frac{1}{2}\EE\!\left[\trace(\Sigmam_{Y\!Y}^{-\!1}\Sigmam_{\tilde{A}\!\tilde{A}})\!-\!\log |\Sigmam_{\tilde{A}\!\tilde{A}}\!+\!\sigma^{2}\Id_{M}|\!-\!\log |\Sigmam_{Y\!Y}^{-\!1}|\right] \nonumber \\
& \ = \! \frac{1}{2}\!\Big(\!\trace \! \left(\Sigmam_{Y\!Y}^{-\!1}\Sigmam^\star_{A\!A}\right)\!-\!\log \! \left|\!\Sigmam_{Y\!Y}^{-\!1}\!\right|\!-\! \EE\!\left[\log \! |\Sigmam_{\tilde{A}\!\tilde{A}}\!+\!\sigma^{2}\Id_{M}|\right]\!\Big).\IEEEeqnarraynumspace
\vspace{-0.25em}
\end{IEEEeqnarray}
The assessment of the ergodic attack performance boils down to evaluating the last term in (\ref{eq:exp_stealth_cost_gauss}). Closed form expressions for this term are provided in \cite{alfano_capacity_2004} for the same case considered in this paper. However, the resulting expressions are involved and are only computable for small dimensional settings. For systems with a large number of dimensions the expressions are computationally prohibitive. To circumvent this challenge we propose a lower bound on the term that yields an upper bound on the ergodic attack performance. Before presenting the main result we provide the following auxiliary convex optimization result.


\begin{lemma} \label{Lemma:logdet_Inv_Wishart}
Let $\Wm_{p}$ denote a central Wishart matrix distributed as $\frac{1}{K-1} W_{p}(K-1,\Id_{p})$ and let $\Bm = \textnormal{diag} (b_{1}, \dots, b_{p})$ denote a positive definite diagonal matrix.
Then
\begin{align}
\vspace{-2em}
\EE\left[\log \left|\Bm + \Wm_{p}^{-1} \right|\right] \geq \sum_{i=1}^{p} \log\left(b_{i} + 1/x_{i}^{\star}\right),
\vspace{-4em}
\end{align}
where $x_{i}^{\star}$ is the solution to the convex optimization problem given by
\vspace{-1em}
\begin{IEEEeqnarray}{ll}
\underset{\left \lbrace x_{i}\right\rbrace_{i=1}^p}{\textnormal{min}} \ & \sum_{i=1}^{p} \log\left( b_{i} + 1/x_{i} \right)\label{Equ:BUPN_1}\\
s.t. \ \ & \sum_{i=1}^{p} x_{i}  = p \label{Equ:BUPN_4} \\
& \textnormal{max}\left(x_{i}\right) \leq  \left(1+\sqrt{p/(K-1)}\right)^2 + 1/(K-1)  \label{Equ:BUPN_5} \IEEEeqnarraynumspace \\
\vspace{-0.25em}
& \textnormal{min}\left( x_{i}\right) \geq  \left(1-\sqrt{p/(K-1)}\right)^2 .\label{Equ:BUPN_6}
\vspace{-0.5em}
\end{IEEEeqnarray}
\end{lemma}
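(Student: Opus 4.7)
The plan is to chain a deterministic matrix inequality with Jensen's inequality and a feasibility check for the convex program. First, I would invoke the Ky Fan / Lidskii majorization for Hermitian positive semi-definite $p \times p$ matrices, which asserts that the decreasingly ordered eigenvalues of a sum are majorized by the component-wise sum of the decreasingly ordered eigenvalues of the summands, that is $\lambda^{\downarrow}(\Bm + \Wm_p^{-1}) \prec \lambda^{\downarrow}(\Bm) + \lambda^{\downarrow}(\Wm_p^{-1})$. Since $(u_1,\ldots,u_p) \mapsto \sum_i \log u_i$ is Schur-concave on $(0,\infty)^p$, this majorization reverses upon taking the sum of $\log$'s, yielding the deterministic bound $\log|\Bm + \Wm_p^{-1}| \geq \sum_{i=1}^p \log\bigl(b_{[i]} + 1/\mu_{(i)}\bigr)$, where $b_{[i]}$ denotes the $i$-th largest diagonal entry of $\Bm$ and $\mu_{(i)}$ denotes the $i$-th smallest eigenvalue of $\Wm_p$ (so that $1/\mu_{(i)}$ is the $i$-th largest eigenvalue of $\Wm_p^{-1}$).

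Second, a direct second-derivative computation shows that $x \mapsto \log(b + 1/x)$ is strictly convex on $(0,\infty)$ for every $b \geq 0$, since $\tfrac{d^2}{dx^2}\log(b+1/x) = (2bx + 1)/(bx^2 + x)^2 > 0$. Jensen's inequality applied term-by-term then produces $\EE\bigl[\log|\Bm + \Wm_p^{-1}|\bigr] \geq \sum_{i=1}^p \log\bigl(b_{[i]} + 1/y_i\bigr)$, where $y_i := \EE[\mu_{(i)}]$. Third, I would verify that $\{y_i\}$ is feasible for the program in (\ref{Equ:BUPN_1})--(\ref{Equ:BUPN_6}). The trace identity gives $\sum_i y_i = \EE[\trace \Wm_p] = p$, since $\EE[\Wm_p] = \Id_p$. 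Because the order statistics satisfy $\mu_{(1)} \leq \cdots \leq \mu_{(p)}$ pointwise, we have $\min_i y_i = \EE[\lambda_{\min}(\Wm_p)]$ and $\max_i y_i = \EE[\lambda_{\max}(\Wm_p)]$, which obey the required one-sided bounds by Lemma~\ref{Pro:MaxMinEigBound}. Because the feasible set is symmetric in the $x_i$'s, the minimum of the program is unchanged under relabeling of the $b_i$'s, so evaluating the objective at $\{y_i\}$ upper bounds the minimum, i.e., $\sum_i \log(b_{[i]} + 1/y_i) \geq \sum_i \log(b_i + 1/x_i^{\star})$, and chaining the inequalities completes the proof.

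The main obstacle is the first step: producing a deterministic determinant bound that simultaneously reduces $\log|\Bm + \Wm_p^{-1}|$ to a separable sum over per-coordinate contributions and pairs the fixed $b_i$'s with the random eigenvalues of $\Wm_p^{-1}$ in an order that survives expectation and aligns with the program's constraints. The Lidskii / Ky Fan majorization addresses both requirements: it is compatible with the Schur-concavity of $\sum \log(\cdot)$, and the resulting monotone pairing (largest $b$ with largest $1/\mu$) induces the ordering $y_1 \leq \cdots \leq y_p$, which turns the separate trace, $\lambda_{\min}$, and $\lambda_{\max}$ bounds of Lemma~\ref{Pro:MaxMinEigBound} into precisely the sum, max, and min constraints of the convex program.
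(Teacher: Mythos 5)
Your proof is correct and reaches the lemma by the same overall strategy as the paper---reduce the log-determinant to a separable sum over eigenvalues, apply Jensen's inequality term by term, and then check feasibility of the expected eigenvalues for the program (\ref{Equ:BUPN_1})--(\ref{Equ:BUPN_6})---but it differs in the first and most delicate step, and there it is more careful than the paper. The paper's proof opens with (\ref{Equ:Convex_Opt_1}), $\log|\Bm+\Wm_p^{-1}| = \sum_i\log\left(b_i + 1/\lambda_i(\Wm_p)\right)$, asserted as an \emph{equality}; this holds only when $\Bm$ and $\Wm_p$ commute, which they do not in the application of the lemma inside Theorem~\ref{Theorem:NonAsym_1}. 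What is true, and what your Lidskii-majorization/Schur-concavity argument establishes, is the one-sided bound $\log|\Bm+\Wm_p^{-1}| \ge \sum_i \log\left(b_{[i]} + 1/\mu_{(i)}\right)$ with the \emph{aligned} pairing (largest $b_i$ against largest eigenvalue of $\Wm_p^{-1}$); the reversed pairing implicit in (\ref{Equ:Convex_Opt_1}) (decreasing $b_i$ against decreasing $\lambda_i(\Wm_p)$, hence increasing $1/\lambda_i$) would bound the determinant from \emph{above}, i.e., in the wrong direction. Your further observation that the feasible set is permutation-symmetric, so the pairing becomes immaterial once one passes to the minimum of the convex program, is precisely what rescues the lemma as stated. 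The remaining ingredients---strict convexity of $x\mapsto\log(b+1/x)$, the trace identity $\EE[\trace(\Wm_p)]=p$, and the extreme-eigenvalue bounds from Lemma~\ref{Pro:MaxMinEigBound}---coincide with the paper's. In short, your route supplies, at the cost of invoking eigenvalue majorization, a rigorous justification for a step the paper states in a form that is false as written, while leaving the lemma's conclusion intact.
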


\begin{proof}
Note that
\begin{align}
\EE\left[\log \left|\Bm + \Wm_{p}^{-1} \right|\right] &= \sum_{i=1}^{p} \EE \left[\log \left( b_{i} + \frac{1}{\lambda_{i}(\Wm_{p})}\right)\right] \label{Equ:Convex_Opt_1}\\
&\geq \sum_{i=1}^{p} \log \left( b_{i} +  \frac{1}{\EE [\lambda_{i}(\Wm_{p})]}\right) \label{Equ:Convex_Opt_2}
\end{align}
where in (\ref{Equ:Convex_Opt_1}), $\lambda_{i}(\Wm_{p})$ is the $i$-th eigenvalue of $\Wm_{p}$ in decreasing order;
(\ref{Equ:Convex_Opt_2}) follows from Jensen's inequality due to the convexity of $\log\left(b_{i} +\frac{1}{x}\right)$ for $x > 0$.
Constraint (\ref{Equ:BUPN_4}) follows from the fact that $\EE [\textnormal{trace} (\Wm_{p})] = p$, and constraints  (\ref{Equ:BUPN_5}) and  (\ref{Equ:BUPN_6}) follow from Lemma \ref{Pro:MaxMinEigBound}. This completes the proof.

%
\end{proof}

The following theorem provides a lower bound for the last term in  (\ref{eq:exp_stealth_cost_gauss}), and therefore, it enables us to
characterize the ergodic attack performance.

\begin{theorem}\label{Theorem:NonAsym_1}
Let $\Sigmam_{\tilde{A}\!\tilde{A}}=\Hm\Sm_{X\!X}\Hm^{\Tt}$ with $\Sm_{X\!X}$ distributed as $\frac{1}{K-1}W_{N} (K-1, \Sigmam_{X\!X})$ and denote by $\Lambdam_{p} = \textnormal{diag} (\lambda_{1}, \dots, \lambda_{p})$ the diagonal matrix containing the nonzero eigenvalues in decreasing order.
Then
\begin{align}
&\EE\!\left[\log \! |\Sigmam_{\tilde{A}\!\tilde{A}}\!+\!\sigma^{2}\Id_{M}|\right] \nonumber \\
&\quad \geq  \left(\sum_{i=0}^{p-1} \psi (K-1-i) \right)- p\log(K-1)\nonumber \\
& \qquad \quad + \sum_{i=1}^{p} \log\left(\frac{\lambda_{i}}{\sigma^{2}} + \frac{1}{\lambda_{i}^{\star}}\right)+2M\log\sigma,
\end{align}
where $\psi (\cdot)$ is the Euler digamma function, $p=\textnormal{rank}(\Hm\Sigmam_{X\!X}\Hm^{\Tt})$, and
$\lbrace\lambda_{i}^{\star}\rbrace_{i=1}^p$ is the solution to the optimization problem given by (\ref{Equ:BUPN_1}) - (\ref{Equ:BUPN_6}) with $b_{i} = \frac{\lambda_{i}}{\sigma^{2}}, \textnormal{for}\; i = 1, \dots,p$.
\end{theorem}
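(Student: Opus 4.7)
The plan is to isolate all the randomness into a standard identity-scale Wishart of the correct reduced dimension $p$, so that the remaining expectations can be dispatched separately by Lemma \ref{Lemma:logdet_Inv_Wishart} and the classical expected log-determinant formula for a Wishart matrix. Concretely, I would first whiten by writing $\Sm_{X\!X}=\Sigmam_{X\!X}^{1/2}\Wm''\Sigmam_{X\!X}^{1/2}$ with $\Wm''\sim\frac{1}{K-1}W_N(K-1,\Id_N)$, and then take the SVD $\Hm\Sigmam_{X\!X}^{1/2}=\Um\Sigma\Vm^{\Tt}$, whose nonzero singular values are $\sqrt{\lambda_1},\dots,\sqrt{\lambda_p}$. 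Orthogonal invariance of $W_N(K-1,\Id_N)$ makes $\Vm^{\Tt}\Wm''\Vm$ again a $\frac{1}{K-1}W_N(K-1,\Id_N)$ matrix, and because $\Sigma$ has a nonzero block only in its first $p$ rows and columns, only the $p\times p$ top-left principal submatrix of this rotated Wishart contributes. Calling this submatrix $\Wm_p$, a standard property of identity-scale Wishart matrices gives $\Wm_p\sim\frac{1}{K-1}W_p(K-1,\Id_p)$. Since $\Um$ is orthogonal it drops out of the determinant, leaving
\begin{equation*}
|\Hm\Sm_{X\!X}\Hm^{\Tt}+\sigma^{2}\Id_M|=\sigma^{2(M-p)}\,|\Lambdam_p^{1/2}\Wm_p\Lambdam_p^{1/2}+\sigma^{2}\Id_p|.
\end{equation*}

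The next step is a purely algebraic manipulation of the $p\times p$ determinant on the right. Using $|\Id+AB|=|\Id+BA|$ followed by factoring out $|\Wm_p|$, I would rewrite it as $\sigma^{2p}\,|\Wm_p|\,|\Lambdam_p/\sigma^{2}+\Wm_p^{-1}|$, so that taking logs and expectations splits the left-hand side into $2M\log\sigma$, $\EE[\log|\Wm_p|]$, and $\EE[\log|\Lambdam_p/\sigma^{2}+\Wm_p^{-1}|]$. The first of the two expectations is the classical closed form for the expected log-determinant of a standard Wishart and evaluates to $\sum_{i=0}^{p-1}\psi(K-1-i)-p\log(K-1)$ under the convention adopted in the paper. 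The second expectation fits the hypothesis of Lemma \ref{Lemma:logdet_Inv_Wishart} with $\Bm=\Lambdam_p/\sigma^{2}$, that is $b_i=\lambda_i/\sigma^{2}$, and is therefore lower bounded by $\sum_{i=1}^{p}\log(\lambda_i/\sigma^{2}+1/\lambda_i^{\star})$. Assembling the three pieces reproduces the inequality exactly.

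The main obstacle I expect is the whitening-plus-SVD reduction at the start: one must verify that this manoeuvre genuinely replaces $\Sm_{X\!X}$, a Wishart with nontrivial scale $\Sigmam_{X\!X}$, by an identity-scale standard Wishart whose dimension is precisely $p=\textnormal{rank}(\Hm\Sigmam_{X\!X}\Hm^{\Tt})$, because Lemma \ref{Lemma:logdet_Inv_Wishart} applies only to identity-scale Wishart matrices, and because the eigenvalues $\lambda_i$ of $\Hm\Sigmam_{X\!X}\Hm^{\Tt}$ have to line up with the nonzero block produced by $\Sigma$. Once $\Wm_p$ has been correctly identified, the rest is mechanical matrix algebra followed by a direct invocation of the two auxiliary results already in hand.
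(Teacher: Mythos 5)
Your proposal is correct and follows essentially the same route as the paper's proof: reduce $\Hm\Sm_{X\!X}\Hm^{\Tt}$ to a standard $p\times p$ Wishart matrix $\Wm_p$ weighted by the nonzero eigenvalues $\Lambdam_p$, factor the log-determinant as $\log|\Wm_p|+\log|\Lambdam_p/\sigma^2+\Wm_p^{-1}|$ plus $2M\log\sigma$, and then apply the digamma closed form for $\EE[\log|\Wm_p|]$ together with Lemma \ref{Lemma:logdet_Inv_Wishart}. Your whitening-plus-SVD argument is simply a more explicit justification of the reduction that the paper compresses into its $\Zm_p^{\Tt}\Zm_p/(K-1)$ notation.
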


\begin{proof}
We proceed by noticing that
\begin{IEEEeqnarray}{ll}
\interdisplaylinepenalty = 0
&\EE\!\left[\log \! |\Sigmam_{\tilde{A}\!\tilde{A}}\!+\!\sigma^{2}\Id_{M}|\right] \nonumber\\
& \ =\EE\left[\log \left|\frac{1}{(K-1)\sigma^{2}}  \Zm_{M}^{\Tt} \Lambdam\Zm_{M} +\Id_{M}\right|\right] + 2M \log \sigma\label{Equ:LBN_1} \IEEEeqnarraynumspace\\
& \ = \EE\left[\log \left|\frac{\Lambdam_{p}}{\sigma^{2}}  \frac{\Zm_{p}^{\Tt} \Zm_{p}}{K-1} +\Id_{M}\right|\right] + 2M \log \sigma\label{Equ:LBN_2}\\
& \ = \! \EE  \! \left[\!\log  \! \left|\frac{\Zm_{p}^{\Tt} \Zm_{p}}{K-1}\right|  \! +  \! \log \!\left|\frac{\Lambdam_{p}}{\sigma^{2}} \! \! + \! \left( \frac{\Zm_{p}^{\Tt} \Zm_{p}}{K-1}\right)^{-\!1}  \! \right|\right] \!+  \!2M\log \sigma \label{Equ:LBN_3}
\IEEEeqnarraynumspace\\
&\ \geq  \left(\sum_{i=0}^{p-1} \psi (K-1-i) \right)- p\log(K-1)\nonumber \\
& \qquad \quad  + \sum_{i=1}^{p} \log\left(\frac{\lambda_{i}}{\sigma^{2}} + \frac{1}{\lambda_{i}^{\star}}\right)+2M\log\sigma, \label{Equ:LBN_5}
\end{IEEEeqnarray}
where in (\ref{Equ:LBN_1}), $\Lambdam$ is a diagonal matrix containing the eigenvalues of $\Hm\Sigmam_{X\!X}\Hm^{\Tt}$ in decreasing order;
(\ref{Equ:LBN_2}) follows from the fact that $p=\textnormal{rank}(\Hm\Sigmam_{X\!X}\Hm^{\Tt})$;
(\ref{Equ:LBN_5}) follows from \cite[Theorem 2.11]{tulino_random_2004} and Lemma \ref{Lemma:logdet_Inv_Wishart}.
This completes the proof.
\end{proof}

\begin{theorem}\label{Theorem:NonAsym_2}
The ergodic attack performance given in (\ref{eq:exp_stealth_cost_gauss}) is upper bounded by
\begin{IEEEeqnarray}{lll}
&\EE\left[f(\Sigmam_{\tilde{A}\!\tilde{A}} )\right] &\leq  \frac{1}{2}\!\Bigg(\!\trace \! \left(\Sigmam_{Y\!Y}^{-\!1}\Sigmam^\star_{A\!A}\right)\!-\!\log \! \left|\!\Sigmam_{Y\!Y}^{-\!1}\!\right| - 2M \log \sigma \Bigg. \IEEEeqnarraynumspace \IEEEnonumber\\
&& \qquad - \bigg(\sum_{i=0}^{p-1} \psi (K-1-i)\! \bigg) \!+\! p\log(K-1) \IEEEnonumber \\
&& \qquad  \quad \Bigg. -\sum_{i=1}^{p} \log\left(\frac{\lambda_{i}}{\sigma^{2}}+\frac{1}{\lambda_{i}^{\star}}\right)\Bigg).
\end{IEEEeqnarray}
\end{theorem}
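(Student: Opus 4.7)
The plan is to observe that Theorem~\ref{Theorem:NonAsym_2} is essentially an immediate corollary of Theorem~\ref{Theorem:NonAsym_1} combined with the closed-form expression already derived for the ergodic cost in (\ref{eq:exp_stealth_cost_gauss}). The only substantive work has already been done in Theorem~\ref{Theorem:NonAsym_1}; what remains is bookkeeping.

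First, I would start from the decomposition of the ergodic attack performance in (\ref{eq:exp_stealth_cost_gauss}). The key observation that makes this decomposition useful is that the trace term is linear in $\Sm_{X\!X}$, so by the linearity of expectation combined with $\EE[\Sm_{X\!X}] = \Sigmam_{X\!X}$ we obtain
\begin{equation}
\EE\!\left[\trace(\Sigmam_{Y\!Y}^{-1}\Sigmam_{\tilde{A}\!\tilde{A}})\right] \;=\; \trace\!\left(\Sigmam_{Y\!Y}^{-1}\Hm\,\EE[\Sm_{X\!X}]\,\Hm^{\Tt}\right) \;=\; \trace\!\left(\Sigmam_{Y\!Y}^{-1}\Sigmam^{\star}_{A\!A}\right),
\end{equation}
and the $-\log|\Sigmam_{Y\!Y}^{-1}|$ term is deterministic. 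Hence the only stochastic contribution is the log-determinant term $\EE[\log|\Sigmam_{\tilde{A}\!\tilde{A}} + \sigma^{2}\Id_{M}|]$, which enters the cost with a negative sign.

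Second, I would apply Theorem~\ref{Theorem:NonAsym_1} directly to lower bound this remaining term by
\begin{equation}
\Big(\textstyle\sum_{i=0}^{p-1}\psi(K-1-i)\Big) - p\log(K-1) + \textstyle\sum_{i=1}^{p}\log\!\left(\tfrac{\lambda_{i}}{\sigma^{2}} + \tfrac{1}{\lambda_{i}^{\star}}\right) + 2M\log\sigma.
\end{equation}
Because this term appears in (\ref{eq:exp_stealth_cost_gauss}) with a minus sign and a factor $1/2$, any lower bound on it becomes an upper bound on $\EE[f(\Sigmam_{\tilde{A}\!\tilde{A}})]$. Substituting and collecting terms yields exactly the claimed expression.

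There is no real obstacle here: the substantive analytical work (the Jensen step, the extreme-eigenvalue control via Lemmas~\ref{Pro:VarianceMaxSingular} and \ref{Pro:MaxMinEigBound}, and the invocation of the standard Wishart log-determinant identity from \cite[Theorem 2.11]{tulino_random_2004}) is already absorbed into Theorem~\ref{Theorem:NonAsym_1}. The only thing to be careful about is sign-tracking and the $1/2$ prefactor when rewriting the bound, and making sure the constants (the digamma sum, the $p\log(K-1)$ offset, and the $2M\log\sigma$ from the rescaling of $\sigma^{2}\Id_{M}$) are carried through in the correct direction. This is essentially a one-line deduction once (\ref{eq:exp_stealth_cost_gauss}) and Theorem~\ref{Theorem:NonAsym_1} are in hand.
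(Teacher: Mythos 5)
Your proposal is correct and follows exactly the paper's own (one-line) argument: combine the decomposition in (\ref{eq:exp_stealth_cost_gauss}) with the lower bound of Theorem~\ref{Theorem:NonAsym_1}, noting that the log-determinant term enters with a negative sign so the lower bound flips to an upper bound. Your additional remark that $\EE[\trace(\Sigmam_{Y\!Y}^{-1}\Sigmam_{\tilde{A}\tilde{A}})]=\trace(\Sigmam_{Y\!Y}^{-1}\Sigmam^{\star}_{A\!A})$ by linearity and $\EE[\Sm_{X\!X}]=\Sigmam_{X\!X}$ is precisely the step the paper already absorbed into the derivation of (\ref{eq:exp_stealth_cost_gauss}).
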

\begin{proof}
The proof follows immediately from combing Theorem \ref{Theorem:NonAsym_1} with (\ref{eq:exp_stealth_cost_gauss}).
\end{proof}

\section{Numerical Results}
\begin{figure}[t!]
  \centering
  \includegraphics[scale =0.4]{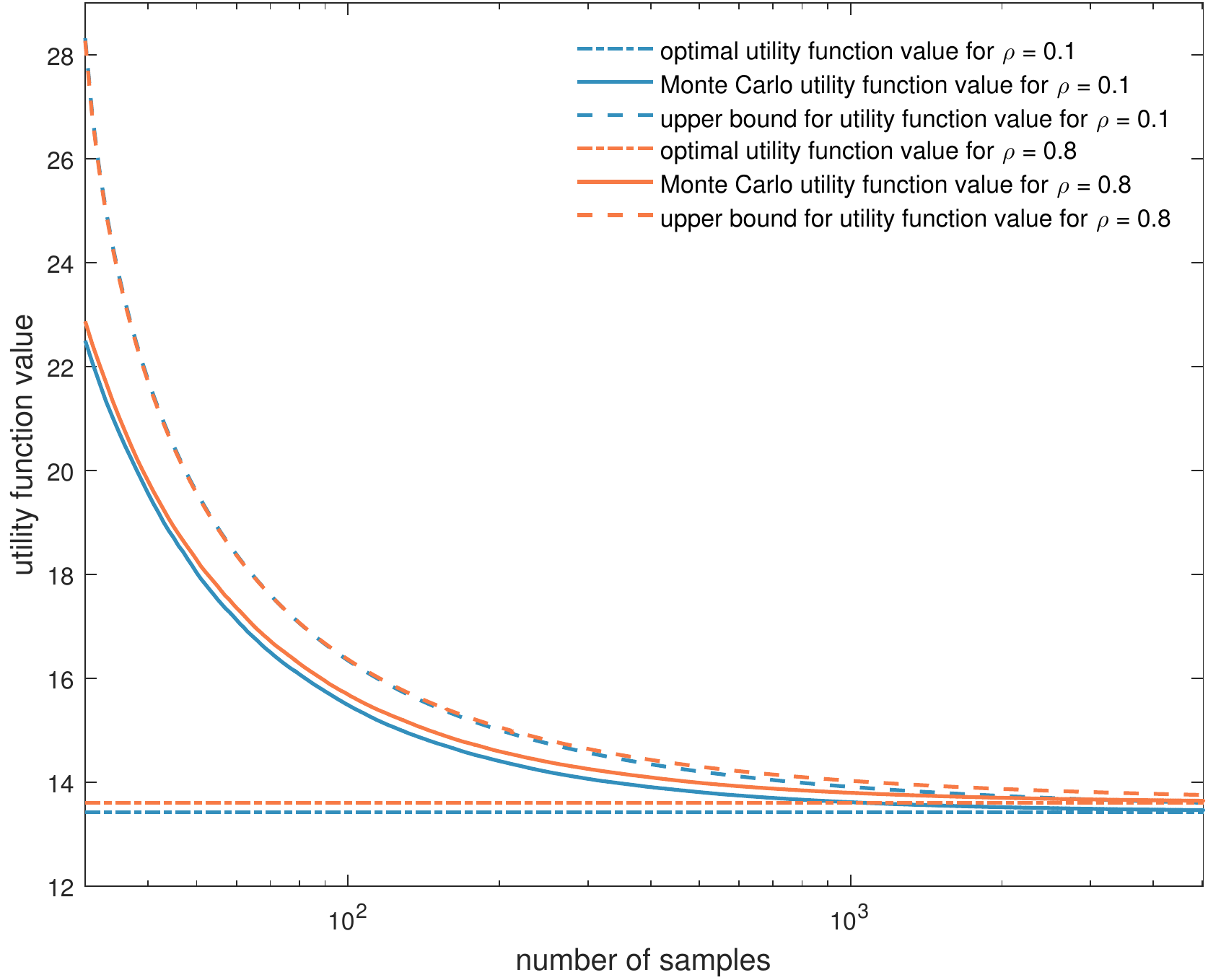}\\
  \caption{Performance of the upper bound in Theorem \ref{Theorem:NonAsym_2} as a function of number of sample for $\rho =0.1$ and $\rho=0.8$ when $\textnormal{SNR} = 20 \ \textnormal{dB}$.}\label{Fig:UB_Rho_30_SNR20_NonAsy}
\end{figure}
The numerical results are obtained on the IEEE 30-Bus test system where
the Jacobian matrix $\Hm$ is obtained using MATPOWER \cite{zimmerman_matpower:_2011}.
For the construction of the stealth attack the covariance matrix of the state variables is chosen to be a Toeplitz matrix with exponential decay parameter $\rho$ as in \cite{esnaola_maximum_2016}.
Specifically, the Toeplitz matrix of dimension $N \times N$ with exponential decay parameter $\rho$ is given by $\Sigmam_{X\!X}=[s_{ij}=\rho^{|i-j|}; i, j =1, 2, \ldots, N].$
We define the Signal-to-Noise Ratio (SNR) as
\begin{equation}
\textnormal{SNR}=10\log_{10}\left(\frac{\trace{(\Hm\Sigmam_{X\!X}\Hm^\textnormal{T}})}{M\sigma^2}\right).
\end{equation}
Fig.\ref{Fig:UB_Rho_30_SNR20_NonAsy} depicts the upper bound in Theorem \ref{Theorem:NonAsym_2} as a function of number of samples for $\rho =0.1$ and $\rho=0.8$ when $\textnormal{SNR} = 20 \ \textnormal{dB}$.
Interestingly, the upper bound in Theorem \ref{Theorem:NonAsym_2} is tight for large values of the training data set size for all values of the exponential decay parameter determining the correlation.

\balance
\bibliographystyle{IEEEbib}
\bibliography{reference}

\end{document}